\newcommand{\rea}{\mathbb{R}}
\theoremstyle{thmstylethree}
\newtheorem{problem}{Problem}
\newtheorem{theorem}{Theorem}[section]
\newtheorem{lemma}[theorem]{Lemma}
\newtheorem{definition}[theorem]{Definition}
\newtheorem{remark}[theorem]{Remark}
\newcommand{\PreserveBackslash}[1]{\let\temp=\\#1\let\\=\temp}
\newcolumntype{C}[1]{>{\PreserveBackslash\centering}p{#1}}
\newcolumntype{R}[1]{>{\PreserveBackslash\raggedleft}p{#1}}
\newcolumntype{L}[1]{>{\PreserveBackslash\raggedright}p{#1}}
\newenvironment{fminipage}%
{\begin{Sbox}\begin{minipage}}%
		{\end{minipage}\end{Sbox}\fbox{\TheSbox}}
\def\norm#1{\left\|  #1 \right\|}
\newcommand\Otil{\widetilde{O}}
\def\calG{\mathcal{G}}
\def\calN{\mathcal{N}}
\newcommand\LL{\bm{\mathit{L}}}
\def\defeq{\stackrel{\mathrm{def}}{=}}
\def\aa{\pmb{\mathit{a}}}
\newcommand\yy{\boldsymbol{\mathit{y}}}
\newcommand\xx{\boldsymbol{\mathit{x}}}
\newcommand\aaa{\boldsymbol{\mathit{a}}}
\newcommand\bb{\boldsymbol{\mathit{b}}}
\newcommand\cc{\boldsymbol{\mathit{c}}}
\newcommand\ee{\boldsymbol{\mathit{e}}}
\newcommand\uu{\boldsymbol{\mathit{u}}}
\renewcommand\AA{\boldsymbol{\mathit{A}}}
\newcommand\BB{\boldsymbol{\mathit{B}}}
\newcommand\CC{\boldsymbol{\mathit{C}}}
\newcommand\DD{\boldsymbol{\mathit{D}}}
\newcommand\EE{\boldsymbol{\mathit{E}}}
\newcommand\MM{\boldsymbol{\mathit{M}}}
\newcommand\TT{\boldsymbol{\mathit{T}}}
\newcommand\vv{\boldsymbol{\mathit{v}}}
\newcommand{\SDDMSolver}{\textsc{Solve}}
\begin{document}

\title[Maximizing the Smallest Eigenvalue of Grounded Laplacian Matrix]{Maximizing the Smallest Eigenvalue of Grounded Laplacian Matrix} 

\author[1]{\fnm{Xiaotian} \sur{Zhou}}\email{22110240080@m.fudan.edu.cn}
\author[1]{\fnm{Run} \sur{Wang}}\email{runwang18@fudan.edu.cn}
\author*[2]{\fnm{Wei} \sur{Li}}\email{fd\_liwei@fudan.edu.cn}
\author*[1,3,4]{\fnm{Zhongzhi} \sur{Zhang}}\email{zhangzz@fudan.edu.cn}

\affil[1]{\orgdiv{Shanghai Key Laboratory of Intelligent Information Processing, School of Computer Science, Fudan University}, \orgaddress{\city{Shanghai}, \postcode{200433}, \country{China}}}

\affil[2]{\orgdiv{Academy for Engineering and Technology, Fudan University}, \orgaddress{\city{Shanghai}, \postcode{200433}, \country{China}}}

\affil[3]{\orgdiv{Shanghai Engineering Research Institute of Blockchains, Fudan University}, \orgaddress{\city{Shanghai}, \postcode{200433}, \country{China}}}

\affil[4]{\orgdiv{Research Institute of Intelligent Complex Systems, Fudan University}, \orgaddress{\city{Shanghai}, \postcode{200433}, \country{China}}}



\abstract{For a connected graph $\calG=(V,E)$ with $n$ nodes, $m$ edges, and  Laplacian matrix $\LL$, a grounded Laplacian matrix $\LL(S)$ of $\calG$ is a $(n-k) \times (n-k)$ principal submatrix of $\LL$, obtained from $\LL$ by deleting $k$ rows and columns corresponding to $k$ selected nodes forming a set $S\subseteq V$. The smallest eigenvalue $\lambda(S)$ of $\LL(S)$ plays a pivotal role in various  dynamics defined on $\calG$. For example, $\lambda(S)$ characterizes the  convergence rate of leader-follower consensus, as well as the effectiveness of a pinning scheme for the pinning control problem, with  larger $\lambda(S)$ corresponding to smaller convergence time or better effectiveness of a pinning scheme.   In this paper, we focus on the problem of optimally selecting a subset $S$ of fixed $k \ll n$ nodes, in order to maximize the smallest eigenvalue $\lambda(S)$ of the grounded Laplacian matrix $\LL(S)$. We show that this optimization problem is NP-hard and that the objective function is non-submodular but monotone. Due to the difficulty to obtain the optimal solution, we first propose a na\"{\i}ve heuristic algorithm selecting one optimal node at each time for $k$ iterations. Then we propose a fast heuristic scalable algorithm to approximately solve this problem,  using derivative matrix, matrix perturbations, and Laplacian solvers as tools. Our na\"{\i}ve heuristic algorithm takes $\tilde{O}(knm)$ time, while the fast greedy heuristic has a nearly linear time complexity of $\tilde{O}(km)$, where $\tilde{O}(\cdot)$ notation suppresses the ${\rm poly} (\log n)$ factors. We also conduct numerous experiments on different networks sized up to one million nodes, demonstrating the superiority of our algorithm in terms of efficiency and effectiveness compared to baseline methods.}

\keywords{Grounded Laplacian, spectral property, combinatorial optimization, graph mining, linear algorithm, matrix perturbation, partial derivative, pinning control, convergence speed}

\maketitle

\section{Introduction}

%
The spectrum of different matrices associated with a graph provides rich structural and dynamical information of the graph, and it has found numerous applications in different areas~\cite{Ne03}. For example, the reciprocal of the largest eigenvalue of the adjacency matrix is approximately equal to the thresholds for the susceptible-infectious-susceptible epidemic dynamics~\cite{WaChWaFa03,ChWaWaLeFa08,VaOmKo08} and bond percolation~\cite{BoBoChRi10} on a graph. Concerning the Laplacian matrix, its smallest and largest nonzero eigenvalues are closely related to the time of convergence and delay robustness of the consensus problem~\cite{OlMu04}; all the nonzero eigenvalues determine the number of spanning trees~\cite{LiPaYiZh20} and the sum of resistance distances over all node pairs~\cite{KlRa93,LiZh18}, with the latter encoding the performance of different dynamical processes, such as the total hitting times of random walks~\cite{Te91,ChRaRuSm89,ShZh19}, robustness to noise in consensus problem~\cite{BaJoMiPa12,VeBoPa15,QiZhYiLi19,YiZhPa20}, and application to system control~\cite{KaVePa17,ChZhZhYuLi21,ChGaZhZh21,GoPeBiVo21,ZhZhLuLi23}.

In addition to the adjacency matrix and Laplacian matrix, the spectrum of the grounded Laplacian matrix also plays important role in different systems~\cite{RaJiMeEg09,PaBa10, PiShFiSu18, LiXuLuChZe21}. For a connected graph $\calG=(V,E)$ with Laplacian matrix $\LL$, the grounded Laplacian matrix $\LL(S)$ induced by a subset $S\subseteq V$ of $k$ nodes (called grounded nodes) is a principal submatrix of $\LL$, obtained from $\LL$ by removing $k$ rows and columns corresponding to the $k$ nodes in $S$~\cite{Mi93}. The sum of the reciprocal of eigenvalues of the grounded Laplacian matrix $\LL(S)$ captures the robustness performance of leader-follower systems with follower nodes subject to noise~\cite{PaBa10}. The magnitude of the smallest eigenvalue $\lambda(S)$ of matrix $\LL(S)$ determines the convergence rate of a leader-follower networked dynamical system~\cite{RaJiMeEg09}, as well as the effectiveness of pinning scheme of pinning control of complex dynamical networks~\cite{LiXuLuChZe21}, with large $\lambda(S)$ corresponding to fast convergence speed and good pinning control performance.

The smallest eigenvalue $\lambda(S)$ of $\LL(S)$ is an increasing function of set $S$ of nodes, corresponding to different physical meanings for different dynamical processes  or systems. In leader-follower consensus systems, $S$ is associated with the set of leader nodes~\cite{RaJiMeEg09}, while for the pinning control problem, $S$ corresponds to the set of controlled nodes~\cite{LiXuLuChZe21}. Since $\lambda(S)$ characterizes the performance of associated dynamical systems and is determined by the set $S$ of grounded nodes, a spontaneous question arises as what nodes should be chosen as grounded nodes such that $\lambda(S)$ is maximized. This is the theme of the current work. Concretely, we focus on the following  eigenvalue optimization problem: Given a graph $\calG=(V,E)$ with $\vert V\vert =n$ nodes and $\vert E\vert =m$ edges, and an integer $0<k  \ll n$, how to select a subset $S\subset V$ of $\vert S\vert = k$ nodes such that the smallest eigenvalue $\lambda(S)$ of the grounded Laplacian matrix $\LL(S)$ induced by the grounded nodes in $S$ is maximized.

The main contribution and work of this paper are summarized as follows.
\begin{itemize}
\item We show that the combinatorial optimization problem is NP-hard and that the optimization function is not submodular, although it is monotone.

\item We define a centrality of a group of nodes, called grounded node group centrality, which is the smallest eigenvalue of the grounded Laplacian matrix induced by the group of nodes. 

\item We propose an efficient, scalable algorithm to find $k$ nodes with the largest grounded node group centrality score. Our algorithm is a greedy heuristic one, which is established based on the tools of the derivative matrix, matrix perturbations,
and Laplacian solvers, and has a nearly linear time complexity of $\tilde{O}(km)$, with $\tilde{O}(\cdot)$ notation suppressing the ${\rm poly} (\log n)$ factors.

\item We evaluate our algorithm by performing extensive experiments on many real-world graphs up to one million nodes, which show that the proposed algorithm is effective and efficient, outperforming other baseline choices of selecting nodes.
\end{itemize}

\section{Related Work}
\label{section:relate}

In this section, we review some current  work related to the present one, including applications and properties of the grounded Laplacian matrix $\LL(S)$, node group centrality measures,  grounded node selection problem for maximizing the smallest eigenvalue $\lambda(S)$ of  $\LL(S)$ and its associated algorithms.

The grounded Laplacian matrix was first defined in~\cite{Mi93}. For a connected graph $\calG=(V, E)$, the grounded Laplacian matrix $\LL(S)$ induced by a subset $S\subseteq V$ of $k$ grounded nodes is a principal submatrix of its Laplacian matrix $\LL$, which is obtained from $\LL$ by deleting $k$ rows and columns associated with  the $k$ grounded nodes in $S$. Matrix $\LL(S)$ arises naturally in various practical scenarios, including leader-follower multi-agent systems~\cite{BaHe06,PaBa10}, pinning control of complex networks~\cite{WaCh02,LiWaCh04}, vehicle platooning~\cite{BaJoMiPa12,HeMaHuSe15}, and power systems~\cite{TeBaGa15}. It is now established that the spectrum of the grounded Laplacian matrix plays a fundamental role in characterizing the performance of these networked systems. For example, the smallest eigenvalue $\lambda(S)$ of $\LL(S)$ characterizes the convergence rate of leader-follower systems~\cite{RaJiMeEg09} and the effectiveness of pinning control scheme~\cite{LiXuLuChZe21}.

Because of the relevance, in recent years, there has been vast literature dedicated to analyzing the spectral properties of the grounded Laplacian matrix, particularly the smallest eigenvalue $\lambda(S)$. It was shown that~\cite{PaBa10}, eigenvalue $\lambda(S)$ is an increasing function of the grounded nodes set $S$, that is, $\lambda(S)$ does not decrease when new nodes are added to $S$. In~\cite{PiSu14,PiSu16, PiShSu15}, upper and lower bounds on the eigenvalue $\lambda(S)$ were provided, in terms of the sum of the weights of the edges between the grounded nodes in $S$ and the eigenvector corresponding to $\lambda(S)$. In~\cite{PiShFiSu18}, graph–theoretic bounds were given for the eigenvalue $\lambda(S)$. While in~\cite{LiXuLuChZe21}, properties of  $\lambda(S)$ were analyzed based on the $\vert S +1\vert $-th smallest eigenvalue of the Laplacian matrix $\LL$, the minimal degree of nodes in set $V \setminus S$, and the number of edges connected nodes in $S$ and $V \setminus S$. Properties for the smallest eigenvalue of grounded Laplacian matrix of weighted undirected~\cite{MaBe17} and directed~\cite{XiCa17} also received attention from the  scientific community.

The smallest eigenvalue $\lambda(S)$ of matrix $\LL(S)$ captures the importance of nodes in set $S$ as a whole in graph $\calG$, via the convergence rate of leader-follower systems~\cite{RaJiMeEg09}, the effectiveness of pinning scheme for pinning control of networks~\cite{LiXuLuChZe21}, and so on~\cite{PiShFiSu18}. We thus use $\lambda(S)$ to quantify the importance/centrality of the group of nodes in $S$, termed grounded node group centrality. There are numerous metrics for centrality of a group of nodes in a graph~\cite{VaAnMe20}, based on structural or dynamical properties, including betweenness~\cite{DoelPuZi09,MaTsUp16}, closeness centrality~\cite{EvBo99,BeGoMe18}, and current flow closeness centrality~\cite{LiPeShYiZh19}, and so on. However, since the criterion for importance of a set of nodes depends on specific applications~\cite{GhTeLeYa14}, grounded node group centrality deviates from previous centrality metrics, even for an individual node~\cite{PiSu14}. Therefore, it is interesting and necessary to study the grounded node group centrality.

Various previous work~\cite{WaLiXuLi19, LiXuLuChZe21} focused on selecting a set $S$ of $k=\vert S\vert $ grounded nodes in order to maximize the smallest eigenvalue $\lambda(S)$ of the grounded Laplacian matrix $\LL(S)$ for specific purposes, such as maximizing the convergence rate of leader-follower systems~\cite{ClAlBuPo12,ClBuPo12} and the effectiveness of pinning control~\cite{LiXuLuChZe21}. A submodular approach with polynomial-time complexity was developed in~\cite{ClHoBuPo18}, while a feature-embedded evolutionary algorithm was presented in~\cite{ZhTa20}. These papers proposed different greedy heuristic algorithms for this combinatorial optimization problem, but left its computational complexity open. Moreover, the properties of the objective function for the problem are still  not well understood. For example, it is not known whether the objective function is submodular or not. Finally, at present, there is no nearly linear time algorithm with good performance for this problem. Thus, almost all experiments in existing studies were executed in medium-sized networks with up to 1,000 nodes.  



\section{Preliminaries}
\label{section:pre}


This section briefly introduces some notations, basic concepts about a graph, its associated matrices, especially the grounded Laplacian matrix, and submodular function.

\subsection{Notations}

Throughout this paper, scalars in $\rea$ are denoted by normal lowercase letters like $a,b,c$, vectors by bold lowercase letters like $\aaa,\bb,\cc$, sets by normal uppercase letters like $A,B,C$, and matrixes by bold uppercase letters like $\AA,\BB,\CC$. 
Notations $\aa^\top$ and $\AA^\top$ are used to denote the transpose of a vector $\aa$ and a matrix $\AA$, respectively. Symbol $\aa_i$ is used to denote the $i$-th element of  vector $\aa$, and $\AA_{ij}$ is used to denote the entry $(i,j)$ of matrix $\AA$. In addition,  use $\boldsymbol{1}$ to denote the  vector of appropriate dimension, whose elements are all ones,  use $\ee_i$ to denote a vector with $i$-th element being 1 and others 0, and use $\EE_{ij}$ to denote an appropriate dimension matrix with the entry $(i,j)$ being 1 and others being 0.



\subsection{Graph and  Grounded Laplacian Matrix}

We denote an undirected connected binary graph as $\calG=(V,E)$, where $V$ is the set of nodes/vertices with size $\vert V\vert =n$ and $E\subseteq V \times V$ is the the set of edges with size $\vert E\vert =m$. Let $\AA \in \{0,1\}^{n \times n}$ be the adjacency matrix for graph $\calG$ where $\AA_{ij}$ equals 1 if nodes $i$ and $j$ are directly connected by an edge and 0 otherwise. The neighbors of node $i\in V$ in graph $\calG$ are given by the set $\calN_{{i}} =\{j \in V \vert  (i, j) \in E\}$. The degree of node $i$ is represented as $d_{i}=\vert \calN_{i}\vert $ and the degree matrix is defined as $\DD=\text{diag}(d_{1},d_{2},\dots ,d_{n})$ accordingly. The Laplacian matrix for the graph is given by $\LL=\DD-\AA$, which is a positive semi-definite matrix .



For a graph $\calG=(V,E)$, its grounded Laplacian matrix $\LL(S)$ is a variant of Laplacian matrix $\LL$, which is obtained from $\LL$ by deleting those rows and columns corresponding to the nodes in a nonempty set $S$ of size $k$~\cite{Mi93}. The nodes in $S$ are called grounded nodes. By definition, $\LL(S)$ is a $(n-k) \times (n-k)$ principal submatrix of Laplacian matrix $\LL$, which is a symmetric diagonally-dominant M-matrix (SDDM). From~\cite{McNeScTs95}, $\LL(S)$ is a positive definite matrix, and its inverse is a non-negative matrix. Let $\lambda(S)$ denote the smallest eigenvalue of $\LL(S)$, which is a monotonic increasing function of set $S$~\cite{PaBa10}. By Perron–Frobenius theorem~\cite{Ma00}, $\lambda(S)$ has an eigenvector $\uu$ with non-negative components.





\subsection{Submodular Function}

We continue to introduce  monotone non-decreasing and submodular set functions. For a set $S$, we use $S+u$ to denote $S\cup\{u\}$.
\begin{definition}[Monotonicity]\label{def:mono}
A set function $f:2^{V}\rightarrow \mathbb{R}$ is monotone non-decreasing if $f(S) \le f(T)$ holds for all $S \subseteq T \subseteq V$.
\end{definition}

\begin{definition}[Submodularity]\label{def:sub}
A set function $f:2^{V}\rightarrow \mathbb{R}$ is submodular if
\begin{equation*}\label{submod}
f(S + u) - f(S) \geq f(T + u) - f(T)
\end{equation*}
 holds for all
$S \subseteq T \subseteq V$ and $u \in V$.
\end{definition}



\section{PROBLEM FORMULATION}
\label{section:formulation}

The smallest eigenvalue $\lambda(S)$ of grounded Laplacian matrix $\LL(S)$ directly connects to a wide range of applications. As shown above, $\lambda(S)$ characterizes the performance of various systems, including the convergence rate of a leader-follower system~\cite{RaJiMeEg09}, the effectiveness of  pinning control~\cite{LiXuLuChZe21}, robustness to disturbances of the leader-follower system via the $\mathcal{H}_\infty$~\cite{PiShFiSu18}, and so on. For these systems, larger $\lambda(S)$ corresponds to better performance. 

\subsection{Problem Statement}


As shown in~\cite{PaBa10}, $\lambda(S)$ is a monotonic increasing function of set $S$. This motivates us to study the problem of how to select $k$ nodes to form the set of grounded nodes $S$, so that the eigenvalue $\lambda(S)$ is maximum.

\begin{problem}[\underline{Max}imization of the \underline{S}mallest \underline{E}igenvalue of \underline{G}rounded \underline{L}aplacian (MaxSEGL)]\label{prob:2}
Given an unweighted, undirected, and connected network $\calG=(V,E)$, we aim to find a subset $S\subset V$ with $0<k  \ll \vert V\vert $ nodes so that the smallest eigenvalue of the grounded Laplacian matrix $\LL(S)$ is maximized. The problem can be formulated as
\begin{equation*}
  	 S^*=\arg\max_{S\subset V,\vert S\vert = k} \lambda(S).
\end{equation*}
\end{problem}

Problem~\ref{prob:2} is a combinatorial optimization problem. We can naturally think of a direct solution method by exhausting all $\tbinom{n}{k}$ cases for set $S$. For each possible case of set $S$ containing $k$ nodes, calculate the smallest eigenvalue $\lambda(S)$ of the grounded Laplacian matrix $\LL(S)$ induced by the grounded nodes in set $S$. Then, output the optimal solution $S^*$, which maximizes the smallest eigenvalue for grounded Laplacian matrix with $k$ grounded nodes. Due to the exponential complexity $O(\tbinom{n}{k}m)$, this algorithm fails when $n$ or $k$ is slightly oversized.


\subsection{Grounded Node Group Centrality}

As is well known, the smallest eigenvalue $\lambda(S)$ characterizes the performance of various dynamical systems, which is determined by grounded nodes in $S$. In this sense, $\lambda(S)$ can be used as a centrality of the group of nodes in $S$, called grounded node group centrality. The larger the value of $\lambda(S)$, the more important the nodes in group $S$ for related dynamic systems. Thus, Problem~\ref{prob:2} is precisely the problem of finding the $k$ most important nodes forming grounded node set $S$, such that $\lambda(S)$ is maximum. Note that when set $S$ includes only one node $i$, $\lambda(\{i\})$ is reduced to the grounded centrality of an individual node defined in~\cite{PiSu14}.

It is worth mentioning that for a group $S$ of nodes, its grounded node group centrality $\lambda(S)$ does not often equal to the integration of the centrality of individual nodes in set $S$, due to the interdependence of the nodes in set $S$. In other words, $\lambda(S) \neq \sum_{i \in S}\lambda(\{i\})$. Thus, the combinatorial enumeration in Problem~\ref{prob:2} is generally not appropriately solved by choosing the top-$k$ nodes with the largest individual grounded centrality values. For example, for the line graph in Fig.~\ref{fig:nosub}, the grounded centrality for nodes 1-7 are, 0.058, 0.081, 0.121, 0.198, 0.121, 0.081 and 0.058. If $S$ consists of three nodes, is easy to obtain that one of the optimal set $S^*$ is $\{2,4,6\}$, instead of $\{3,4,5\}$ whose elements are top-$3$ nodes with the largest grounded centrality scores.

\begin{figure}[htbp]
    \centering
    \includegraphics[width=0.6\textwidth]{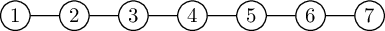}
    \caption{A line graph  with 7 nodes and 6 edges.}
    \label{fig:nosub}
\end{figure}

It has been demonstrated in~\cite{PiSu14} that for individual nodes, the grounded centrality gives quite a different  ranking for node  importance, compared with many other node centrality measures, such as degree centrality, betweenness centrality, eigenvector centrality, closeness centrality, and information centrality. Thus, grounded node group centrality also deviates from previously proposed centrality measures for a node group, including betweenness centrality~\cite{DoelPuZi09,MaTsUp16}, closeness centrality~\cite{EvBo99,BeGoMe18}, as well as current flow closeness centrality~\cite{LiPeShYiZh19}. Therefore, it is of independent interest to study Problem~\ref{prob:2}, with an aim to find the $k$ most important nodes as the set $S^*$ of grounded nodes corresponding to maximum $\lambda(S^*)$.

\subsection{Non-submodularity of the Objective Function}




Given the combinatorial nature of Problem~\ref{prob:2}, one seeks a heuristic approximately solving this problematic combinatorial problem. For a combinatorial optimization problem, finding appropriate properties of its objective function is crucial to solving it effectively. For example, when the objective function is submodular, a simple greedy algorithm by selecting one element with maximal marginal benefit in each iteration yields a solution with $1-e^{-1}$ approximation ratio~\cite{NeWoFi78}, which has been widely used in combinatorial optimization problems and has effectively solved many NP-hard problems in recent years.

However, there are still a lot of combinatorial optimization problems whose objective functions are not submodular. For these problems, a heuristic cannot guarantee a $(1-e^{-1})$-approximation solution. Unfortunately, Problem~\ref{prob:2} happen to be this problem class. To show the non-submodularity of the objective function for Problem~\ref{prob:2}, we give an example of the line graph with 7 nodes and 6 edges in Fig.~\ref{fig:nosub}.

Let set $A$ = $\{1\}$, $B$ = $\{1,2\}$ and $v$ = $6$. Simple computation leads to
\begin{align*}
\lambda(A)=0.0581, & &  \lambda(A+v)=0.3820, \\
\lambda(B)=0.0810, & &  \lambda(B+v)=0.5858,
\end{align*}
which means
\begin{equation*}
\lambda(A+v)-\lambda(A)=0.3239<0.5048=\lambda(B+v)-\lambda(B).
\end{equation*}
Therefore, the objective function of Problem~\ref{prob:2} is non-submodular.


\section{Hardness of  Problem~\ref{prob:2} and a  Na\"{\i}ve Greedy Heuristic Algorithm}

In this section, we prove that Problem~\ref{prob:2} is NP-hard, and then provide a  na\"{\i}ve greedy heuristic algorithm for this problem.

\subsection{Hardness of  Problem~\ref{prob:2}}

As shown above,  Problem~\ref{prob:2} is combinatorial, and thus can be exactly solved by brute-force search. On the other hand, the objective function is the smallest eigenvalue of a grounded Laplacian matrix,  which suggests that Problem~\ref{prob:2} is likely to be very difficult.  Here we confirm this intuition  by proving that Problem~\ref{prob:2} is  NP-hard, by  reduction from node cover on 3-regular graphs, which has been proved to be  an NP-complete problem~\cite{FrHeJa98}. A $3$-regular graph is a graph for which the degree of every node is $3$.
For a graph $\calG=(V,E)$,  a node set $C \subset V$  is called a node cover of  graph $\calG$ if every edge in $E$ has at least one endpoint in  $C$.   A node set $I$ is called an independent set of   graph $\calG$, if no two nodes in  which are adjacent. By definition,  if $C$ is a node cover of $\calG$, then $V \setminus C$ is an independent set of $\calG$.

The decision version of vertex cover on a connected 3-regular graph  is stated below.

\begin{problem}[\uline{V}ertex \uline{C}over on a \uline{3}-\uline{R}egular \uline{G}raph, VC3RG]
Given a connected 3-regular graph $\calG=(V,E)$ and a positive integer $k$, decide whether or not there exists a node  set $S \subset V$ such that $\vert S\vert = k$ and $S$ is a node cover of $\calG$.
\end{problem}

We proceed to give the decision version of Problem~\ref{prob:2}.

\begin{problem}[MaxSEGL \uline{D}ecision Version, MaxSEGLD]
Given a connected graph $\calG=(V,E)$, a positive integer $k$, and a postive real number $r \in \mathbb{R}^+$, decide whether or not there exists a node set $S \subset V$ such that $\vert S\vert = k$ and $\lambda(S)\geq r$.
\end{problem}

Before giving the reduction, we introduce the following lemma.

\begin{lemma}\label{lem:reduc}
Let $\calG=(V,E)$ be a connected 3-regular graph, and let $S$ be a nonempty subset of $V$. 
Then for the grounded Laplacian $\LL(S)$ with nodes in $S$ being grounded nodes, $\lambda(S)\leq 3$ with equality if and only if $S$ is a node cover of $\calG$.
\end{lemma}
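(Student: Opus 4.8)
The plan is to reduce everything to a statement about the adjacency matrix of the subgraph induced by the non-grounded nodes, exploiting $3$-regularity. Since every node of $\calG$ has degree $3$, every diagonal entry of $\LL$ equals $3$, and deleting from $\LL$ the rows and columns indexed by the grounded set $S$ leaves $\LL(S) = 3\II - \BB$, where $\BB$ is the principal submatrix of $\AA$ indexed by $V\setminus S$; equivalently, $\BB$ is the adjacency matrix of the subgraph $\calG[V\setminus S]$ induced by the non-grounded nodes (throughout I assume $S\subsetneq V$, so that $\LL(S)$ is a nonempty matrix). Consequently $\lambda(S) = 3 - \mu$, where $\mu$ is the largest eigenvalue of the symmetric nonnegative matrix $\BB$, and the lemma becomes the assertion that $\mu\ge 0$, with equality precisely when $V\setminus S$ is an independent set of $\calG$---that is, when $S$ is a node cover.

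First, the bound $\lambda(S)\le 3$: by the Courant--Fischer (Rayleigh-quotient) characterization of the smallest eigenvalue of a symmetric matrix, $\lambda(S) \le \ee_i^\top\LL(S)\ee_i = d_i = 3$ for any non-grounded node $i$; equivalently $\mu = \max_{\norm{\xx}_2 = 1}\xx^\top\BB\xx \ge \ee_i^\top\BB\ee_i = 0$. Next, if $S$ is a node cover then every edge of $\calG$ has an endpoint in $S$, so $\calG[V\setminus S]$ has no edges, i.e. $\BB$ is the zero matrix; then $\LL(S) = 3\II$ and $\lambda(S) = 3$, which is the ``if'' direction. For the converse, suppose $\lambda(S) = 3$. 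Then $\xx^\top(\LL(S) - 3\II)\xx\ge 0$ for every $\xx$, i.e. $\LL(S) - 3\II = -\BB$ is positive semidefinite. But if $\calG[V\setminus S]$ contained an edge $\{i,j\}$, the unit vector $\xx = (\ee_i + \ee_j)/\sqrt{2}$ would give $\xx^\top(-\BB)\xx = -\BB_{ij} = -1 < 0$, a contradiction. Hence $\calG[V\setminus S]$ is edgeless, $V\setminus S$ is an independent set, and $S$ is a node cover of $\calG$.

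The proof is short, and its only step requiring an argument is the last one---deducing from $\lambda(S) = 3$ that the induced subgraph on the non-grounded nodes has no edge. The rank-one test vector $(\ee_i + \ee_j)/\sqrt{2}$ settles it; an equivalent route is to observe that when $\lambda(S) = 3$ each coordinate vector $\ee_i$ with $i\notin S$ attains the minimum Rayleigh quotient, so it is an eigenvector of $\LL(S)$ for eigenvalue $3$, which forces the $i$-th column of $\LL(S)$ to be $3\ee_i$ and hence node $i$ to have no neighbor outside $S$. Everything else is immediate from $3$-regularity together with the Rayleigh-quotient identity, so I do not anticipate any genuine obstacle; the bound $\mu\ge 0$ can alternatively be obtained from Cauchy's interlacing theorem, but the direct computation is self-contained.
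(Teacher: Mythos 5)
Your proof is correct and takes essentially the same approach as the paper: the ``if'' direction is identical (for a node cover, $\calG[V\setminus S]$ is edgeless and $\LL(S)=3\II$), and the converse is again a Courant--Fischer/Rayleigh-quotient bound showing $\lambda(S)<3$ whenever $\calG[V\setminus S]$ contains an edge. The only difference is cosmetic---the paper tests with the all-ones vector on a connected component of $\calG[V\setminus S]$ containing an edge, obtaining $(3t-q)/t<3$, while you write $\LL(S)=3\II-\BB$ and test with $(\ee_i+\ee_j)/\sqrt{2}$ on a single edge, an equally valid and slightly more local version of the same argument.
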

\begin{proof}
If $S$ is a node cover of $\calG$,  then $V \backslash S$ is an independent set of $\calG$, which implies that there is no edge between any pair of nodes in $V\backslash S$. Thus,  $\LL(S)$ is a diagonal matrix with all diagonal entries being 3, leading to $\lambda(S)=3$.

We continue to show that if $S$ is not a node cover of $\calG$ then $\lambda(S)<3$. When set $S$ is not a node cover of $\calG$, then $V\backslash S$ is definitely  not an independent set of $\calG$. Thus, $\LL(S)$ can be written as  a block diagonal matrix, with each block corresponding to a connected component of a subgraph $\calG[V\backslash S]$, which is  induced by nodes in $V \backslash S$. Let $\mathbf{T}$ be a block of $\LL(S)$, which corresponds to $t> 1$ nodes in set $V\backslash S$.  The diagonal elements of matrix $\mathbf{T}$  is 3. Suppose that for the non-diagonal elements in $\mathbf{T}$, $q>0$ entries have values of $-1$, while the others are 0.

Let $\lambda_{\min}(\TT)$ be the smallest eigenvalue of the matrix $\TT$. Then for the vector $\boldsymbol{1}$, we have
\begin{equation}
\lambda(S) \leq \lambda_{\min}(\TT) \leq \frac{\boldsymbol{1}^\top \TT \boldsymbol{1}}{\boldsymbol{1}^\top \boldsymbol{1}} = \frac{3t-q}{t} <3,
\end{equation}
which completes the proof.
\end{proof}

By Lemma~\ref{lem:reduc} we obtain the following theorem.

\begin{theorem}
Maximizing the smallest eigenvalue of a grounded Laplacian matrix subject to a cardinality constraint is NP-hard.
\end{theorem}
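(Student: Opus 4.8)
The plan is to reduce from VC3RG (vertex cover on a connected 3-regular graph), which is stated to be NP-complete, to MaxSEGLD, the decision version of our problem. Given an instance of VC3RG consisting of a connected 3-regular graph $\calG=(V,E)$ and an integer $k$, I would construct an instance of MaxSEGLD by taking the \emph{same} graph $\calG$, the \emph{same} cardinality bound $k$, and the threshold $r=3$. This transformation is clearly computable in polynomial (indeed linear) time, so the only thing left is to verify the equivalence of the two instances.

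The equivalence is exactly what Lemma~\ref{lem:reduc} delivers. For the forward direction, suppose $\calG$ has a vertex cover $S$ with $|S|=k$. Then $S$ is nonempty (since $\calG$ has edges), and by Lemma~\ref{lem:reduc}, $\lambda(S)=3\geq r$, so the MaxSEGLD instance is a yes-instance. For the reverse direction, suppose there is a set $S\subset V$ with $|S|=k$ and $\lambda(S)\geq 3$. By Lemma~\ref{lem:reduc} we always have $\lambda(S)\leq 3$, so in fact $\lambda(S)=3$, and the ``only if'' part of the same lemma forces $S$ to be a node cover of $\calG$ of size $k$. Hence the MaxSEGLD instance is a yes-instance precisely when the VC3RG instance is, establishing that MaxSEGLD is NP-hard, and therefore that the optimization problem MaxSEGL (Problem~\ref{prob:2}) is NP-hard as well.

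I should be mildly careful about one edge case: Lemma~\ref{lem:reduc} requires $S$ to be a nonempty subset of $V$. In the reverse direction this is automatic since $k$ is a positive integer; in the forward direction it holds because a 3-regular connected graph on at least two vertices has at least one edge, so any vertex cover is nonempty. I would also remark that the construction preserves connectivity trivially since the graph is unchanged, matching the hypothesis ``connected graph'' in the statement of MaxSEGLD.

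The main obstacle here is not the reduction itself — which is essentially the identity map on the graph — but rather Lemma~\ref{lem:reduc}, which does the real work. That lemma has already been proved in the excerpt: the key observations are that if $S$ is a vertex cover then $V\setminus S$ is independent, so $\LL(S)$ is diagonal with all entries $3$; and conversely, if $S$ is not a vertex cover, then the induced subgraph $\calG[V\setminus S]$ has an edge, so some diagonal block $\TT$ of $\LL(S)$ has an off-diagonal $-1$ entry, and testing the all-ones vector via the Rayleigh quotient (Courant--Fischer) gives $\lambda(S)\leq \lambda_{\min}(\TT)\leq (3t-q)/t<3$. So the proof of the theorem is just a clean packaging of the polynomial-time reduction plus an invocation of Lemma~\ref{lem:reduc}; the only thing to write carefully is the ``yes-instance iff yes-instance'' argument above.
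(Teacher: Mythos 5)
Your proposal matches the paper's proof: both reduce VC3RG to MaxSEGLD via the identity construction with threshold $r=3$ and let Lemma~\ref{lem:reduc} carry the equivalence. Your explicit spelling-out of the yes-instance equivalence and the nonemptiness of $S$ is slightly more careful than the paper's brief statement, but it is the same argument.
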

\begin{proof}
We first provide a polynomial reduction
\begin{equation}
p: \{\calG=(V,E),k\} \to  \{\calG=(V,E),k,r\}
\end{equation}
from instances of VC3RD to instances of  MaxSEGLD. For a connected 3-regular graph $\calG = (V, E)$ with $n$ nodes, we construct the following reduction $p$ as
\begin{equation}
p((\calG=(V,E),k))=(\calG=(V,E),k,3).
\end{equation}
By Lemma~\ref{lem:reduc}, $p$ is a polynomial reduction from VC3RD to MaxSEGLD, indicating  that Problem~\ref{prob:2} is NP-hard.
\end{proof}

\subsection{A Na\"{\i}ve Heuristic Algorithm}


Due to the NP-hardness of Problem~\ref{prob:2}, we resort to efficient heuristics for this problem. First, we propose a na\"{\i}ve greedy heuristic algorithm. As shown in~\cite{PaBa10}, given a graph $\calG=(V,E)$, a set $S\subset V$ of grounded nodes, and a node $j \in V \backslash S$, $\lambda(S+j) \geq \lambda(S)$ holds by Cauchy's interlacing theorem~\cite{Ba10}. Define $\lambda_S(j)=\lambda(S+j)-\lambda(S)$ as the increase of the smallest eigenvalue after node $j$ is added to set $S$ of grounded nodes, then $\lambda_S(j) \geq 0$ for any node $j \in V \backslash S$. Then, the na\"{\i}ve greedy heuristic algorithm, outlined in Algorithm~\ref{alg:exact}, is presented as follows. Initially, the set $S$ of grounded nodes is set to be empty. Then $k$ nodes are iteratively picked to $S$ as grounded nodes from set $V \backslash S$ of candidate nodes. In each iteration of the na\"{\i}ve greedy algorithm, the node $j$ in candidate node set is selected to maximize the quantity $\lambda_S(j)$. The na\"{\i}ve algorithm stops when $k$ nodes are selected to be added to set $S$.

\begin{algorithm}[htbp]
  \caption{\textsc{Na\"{\i}ve}$(\calG, k)$}
	\label{alg:exact}
\begin{algorithmic}
\Require A graph $\calG=(V,E)$; an integer $k < \vert V\vert $
\Ensure $S$: a subset of $V$ with $\vert S\vert  = k$
\State Initialize solution $S = \emptyset$
\For{$i = 1$ to $k$}
		\State Compute $\lambda_S(j) = \lambda(S +j)-\lambda(S)$ for each $j \notin S$
		\State Select $s$ s.t.  $s \gets \mathrm{arg\, max}_{j \in V \setminus S} \lambda_S(j)$
		\State Update solution $S \gets S \cup \{s\}$
\EndFor
\State \Return $S$
\end{algorithmic}
\end{algorithm}

Although there is no guarantee for the error between this na\"{\i}ve method and the optimal solution due to the posteriority and non-submodularity, in the Experiments Section, we will see that this method has many advantages over some baseline schemes. Note that the main computation cost for the na\"{\i}ve algorithm lies in determining $\lambda_S(j)$ in each of $k$ iterations. For small $k$, for each $j \in V \setminus S $, the increment of the smallest eigenvalue $\lambda_S(j) $ contributed by node $j$ can be calculated by the numerical method in $O(m)$ time~\cite{La52}. Thus, a direct implementation of this na\"{\i}ve approach takes $O(knm)$ time.

\section{Nearly Linear Time Approximation Algorithm}
\label{section:algorithm}

Although the na\"{\i}ve approach described in Algorithm~\ref{alg:exact} is much faster than  brute-force search, this algorithm is still infeasible for large-scale networks with millions of nodes due to its high computational requirements. The main computational cost of Algorithm~\ref{alg:exact}  lies in determining the increment $\lambda_S(j)$ of the smallest eigenvalue of grounded Laplacian matrix when a new node $j$ is added into the set $S$ of grounded nodes. In this section, we propose a nearly linear time approximation algorithm evaluating $\lambda_S(j)$ for every candidate node $j \in V \setminus S$.

Note that as a principal submatrix of Laplacian matrix $\LL$, the grounded Laplacian matrix $\LL(S)$ is a SDDM matrix, and its inverse $\LL(S)^{-1}$ is a non-negative matrix. By matrix tree theorem~\cite{Ch82}, when an edge linking two nodes in $V \setminus S$ is removed, the determinant of matrix $\LL(S)$ decreases. Thus, the determinant  of matrix $\LL(S)^{-1}$ increases, it is the same with  $\lambda (S)$ that is the largest eigenvalue of matrix $\LL(S)^{-1}$. Thus, $\lambda (S)$ is an increasing function of each edge in $ E \cap ((V \backslash S) \times (V \backslash S))$.  For an edge linking two nodes in $S$ and $V \backslash S$, its removal often does not lead to the increase of $\lambda (S)$~\cite{PiShFiSu18}. Moreover, for a node $j$ with large $\lambda_S(j)$ in $V \backslash S$ in a large network, the number of its neighbors in $S$ is relatively small. Recall that picking a new node $j$ added to $S$ is equivalent to removing the row and column from $\LL(S)$ corresponding to node $j$. Thus, $\lambda_S(j)=\lambda(S+j)-\lambda(S)$ can be approximately regarded as the sum of displacement of $\lambda(S)$ caused by the removal of  edges incident to node $j$, with the other ends in $V \backslash S$. In the sequel, we use two different approaches to evaluate the sensitivity $\lambda_S(j)$ of eigenvalue $\lambda (S)$, when node $j$ is added to $S$.



\subsection{Derivative Based  Eigenvalue Sensitivity}

The increment $\lambda_S(j)$ of the smallest eigenvalue $\lambda(S)$ can be looked upon as the sum of increase of $\lambda(S)$ contributed by the removal of every edge incident to node $j$. For convenience of the following presentation, we use $\lambda$ to represent $\lambda(S)$. Moreover, we introduce a centrality measure $\lambda_S(e)$, which quantifies the impact of removal of an edge $e$ on the increase of the smallest eigenvalue $\lambda(S)$. In~\cite{YiSh2018,SiBoBaMo18,KaTo19}, the partial derivative was applied to measure the role of an edge on certain quantities concerned. Motivated by these previous work, we employ a similar idea to characterize $\lambda_S(e)$, defined as the rate at which $\lambda(S)$ changes with respect to the change of the weight of edge $e$, since an weighted graph can be regarded as a weighted one, with each edge being a unit weight.

Specifically, for an edge $e$ with two end nodes $i$ and $j$, $\lambda_S(e)$ is defined by derivative matrix as $\lambda_S(e)=\lambda_S(i,j)=\frac{\partial \lambda}{\partial \LL(S)_{ij}}$, since $\lambda$ is a function of the weight on every edge. Lemma~\ref{lem:NBDC} shows that $\lambda_S(e)$ can be expressed in terms of the product of elements corresponding to two end nodes $i$ and $j$ of the eigenvector $\uu$ associated with $\lambda$.



\begin{lemma}\label{lem:NBDC}
Given a graph $\calG=(V,E)$ with Laplacian matrix $\LL$, let $S \subset V$ be a  set of grounded nodes, and let $(\lambda,\uu)$ be the eigen-pair corresponding to  the smallest eigenvalue and its associated eigenvector for the grounded Laplacian matrix $\LL(S)$. Then for any edge $e=(i,j) \in E \cap ((V \backslash S) \times (V \backslash S))$, we have
\begin{equation*}
\lambda_S(e)=\lambda_S(i,j)=\frac{\partial \lambda}{\partial \LL(S)_{ij}}=\uu_i\uu_j.
\end{equation*}

\begin{proof}
By definition of the eigenvalues $\lambda$ and eigenvector $\uu$, we have $\LL(S)\uu=\lambda\uu$. Performing the partial derivative with respect to $\LL(S)_{ij}$ on both side leads to
\begin{equation*}
\frac{\partial \LL(S)}{\partial \LL(S)_{ij}}\uu +\LL(S)\frac{\partial \uu}{\partial \LL(S)_{ij}}=\frac{\partial \lambda}{\partial \LL(S)_{ij}}\uu + \lambda\frac{\partial \uu}{\partial \LL(S)_{ij}}.
\end{equation*}
Left multiplying $\uu^\top$ on both sides of the above equation yields $\lambda_S(e)=\frac{\partial \lambda}{\partial \LL(S)_{ij}}=\uu_i\uu_j$.
\end{proof}
\end{lemma}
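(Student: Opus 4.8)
This is a first-order eigenvalue perturbation statement — the Hellmann--Feynman identity specialized to a single matrix entry — so the plan is to differentiate the defining eigen-equation and use self-adjointness of $\LL(S)$ to eliminate the contribution of the unknown eigenvector derivative. Write $t \defeq \LL(S)_{ij}$ and regard $\lambda = \lambda(S)$ together with its normalized eigenvector $\uu$, $\uu^\top\uu = 1$, as functions of $t$ while every other entry of $\LL(S)$ is held fixed. Differentiating $\LL(S)\uu = \lambda\uu$ with respect to $t$ gives
\begin{equation*}
\frac{\partial \LL(S)}{\partial t}\,\uu + \LL(S)\,\frac{\partial \uu}{\partial t} = \frac{\partial \lambda}{\partial t}\,\uu + \lambda\,\frac{\partial \uu}{\partial t}.
\end{equation*}

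Next I would left-multiply by $\uu^\top$. Since $\LL(S)$ is symmetric, $\uu^\top\LL(S) = \lambda\uu^\top$, so $\uu^\top\LL(S)\,\frac{\partial \uu}{\partial t}$ cancels $\lambda\,\uu^\top\frac{\partial \uu}{\partial t}$ on the other side; this cancellation is the crux of the argument. Using $\uu^\top\uu = 1$ on the surviving right-hand term leaves
\begin{equation*}
\frac{\partial \lambda}{\partial t} = \uu^\top\frac{\partial \LL(S)}{\partial t}\,\uu.
\end{equation*}
Because $t$ is the single entry in position $(i,j)$, $\partial \LL(S)/\partial t = \EE_{ij}$, and $\uu^\top\EE_{ij}\uu = \uu_i\uu_j$, which is exactly the asserted formula $\lambda_S(e) = \uu_i\uu_j$.

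The step that deserves genuine care — the main obstacle — is the a priori legitimacy of differentiating $\lambda$ and $\uu$ with respect to the matrix entry at all: analytic perturbation theory supplies smooth branches $\lambda(t)$ and $\uu(t)$ only when $\lambda$ is a \emph{simple} eigenvalue. That hypothesis is available here from the structural facts recalled earlier in the paper: $\LL(S)$ is an invertible SDDM matrix with nonnegative inverse, and when the subgraph induced by $V\setminus S$ is connected $\LL(S)^{-1}$ is irreducible, so Perron--Frobenius makes its Perron value $1/\lambda(S)$ simple with a strictly positive eigenvector; hence $\lambda(S)$ is simple, its spectral projector varies analytically, and a normalized eigenvector can be chosen smoothly in $t$. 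I would also note that the normalization $\uu^\top\uu = 1$ is only a convenience: under any other smooth normalization the extra term is proportional to $\uu^\top\frac{\partial \uu}{\partial t}$ and is killed by the same symmetry cancellation. Finally, the hypothesis $e = (i,j) \in E\cap((V\setminus S)\times(V\setminus S))$ is what guarantees that $\LL(S)_{ij}$ is genuinely an entry of the grounded matrix (equal to $-1$ before perturbation, consistent with viewing a unit-weight edge as having a continuously varying weight), so the partial derivative is well posed.
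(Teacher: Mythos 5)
Your proof is correct and follows essentially the same route as the paper: differentiate the eigen-equation $\LL(S)\uu=\lambda\uu$ with respect to the entry $\LL(S)_{ij}$, left-multiply by $\uu^\top$, and let the symmetry of $\LL(S)$ cancel the eigenvector-derivative terms. The only difference is that you make explicit the normalization, the identification $\partial\LL(S)/\partial t=\EE_{ij}$, and the simplicity of $\lambda(S)$ justifying differentiability — details the paper leaves implicit — which is a welcome sharpening but not a different argument.
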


By definition $\lambda_S(e) $ measures the importance of  edge $(i,j)$ or  $(j,i)$ with $i$ and $j$ being the two end nodes of $e$, which is equal to the sensitivity of $\lambda(S)$ to the change in the weight of edge $e$.  Intuitively, for any node $j$ in $V\setminus S$, $\lambda_S(j)$ is naturally equal to the sum of the importance of the edges adjacent to $j$. Hence, the change rate of $\lambda(S)$ after removing from matrix $\LL(S)$ the row and column corresponding to node $j$ can be evaluated as
\begin{equation}\label{eq:c1}
\tilde{\lambda}_S(j)=2\,\sum_{i \in \calN_j\setminus S}\lambda_S(i,j)=2 \uu_j\sum_{i \in \calN_j \setminus S} \uu_i,
\end{equation}
since for every edge  $e$  with  end nodes $i$ and $j$,  $(i,j)$ and  $(j,i)$ are both counted.

\subsection{Matrix Perturbation Based  Eigenvalue Sensitivity}



In this subsection, we show that the increment $\lambda_S(j)$ of the smallest eigenvalue $\lambda(S)$ can also be evaluated by matrix perturbation theory~\cite{Mi11}, and the matrix perturbation technique has been widely used for eigenvalues~\cite{HeYaYuZh19,MiSuNi10,ReOtHu06,BoPaRa99}. To this end,  we first estimate the increment of the smallest eigenvalue by using matrix perturbation, when one edge is removed.

Notice that for a graph $\calG=(V,E)$ with grounded node set $S\subset V$, if an edge  $e$ with end nodes $i$ and $j$ in $V\setminus S$ is deleted, the resultant  grounded Laplacian becomes $\LL(S) + \EE_{ij}+\EE_{ji}$. Let $(\lambda,\uu)$ be the eigen-pair corresponding to  the smallest eigenvalue and its associated eigenvector for the grounded Laplacian matrix $\LL(S)$. Define  $\Delta \LL(S)=\EE_{ij}+\EE_{ji}$.
Then if one  perturbs the matrix $\LL(S)$ with $\Delta \LL(S)$, $\lambda$ varies with $\Delta \lambda$ and $\uu$ varies with $\Delta \uu$ after perturbing. Thus, the eigenequation after removing edge  $e$ is given by
\begin{equation}\label{equa:pert}
    (\LL(S)+\Delta \LL(S))(\uu+\Delta\uu)=(\lambda +\Delta \lambda)(\uu+\Delta\uu),
\end{equation}
Left multiplying $\uu^\top$ on both sides of Eq.~\eqref{equa:pert} yields
\begin{equation*}
\uu^\top \Delta \LL(S) \uu +\uu^\top \Delta \LL(S) \Delta\uu=\uu^\top \Delta \lambda \uu +\uu^\top \Delta \lambda \Delta\uu.
\end{equation*}
Then the eigen-gap for the smallest eigenvalue is
\begin{equation*}
\Delta \lambda = \frac{\uu^\top \Delta \LL(S) \hat{\uu}}{\uu^\top \hat{\uu}}=\frac{\uu_i\hat{\uu}_j+\uu_j\hat{\uu}_i}{\uu^\top \hat{\uu}},
\end{equation*}
where $\hat{\uu}=\uu+\Delta \uu$.

For a large graph network $\calG$, the removal of one edge has little influence on the whole network, and also does not influence the eigenvector $\uu$ too much,  meaning $\Delta \uu \approx \boldsymbol{0}$~\cite{HeYaYuZh19,MiSuNi10,ReOtHu06}. Hence
\begin{equation*}
 \Delta \lambda \approx 2\uu_i\uu_j>0,
\end{equation*}
which is consistent with Lemma~\ref{lem:NBDC}.


The following lemma is immediate by repeating the operation of edge removal on all the edges incident to a node $j$.

\begin{lemma}\label{lem:deltall}
Let $\calG=(V,E)$ be a connected graph and $\LL$ be its Laplacian matrix, let $S \subset V$ be a node set. For the matrix $\LL(S)$, we define $(\lambda,\uu)$ be the eigenpair corresponding to  the smallest eigenvalue $\lambda$ and its associated eigenvector $\uu$ for the grounded Laplacian matrix $\LL(S)$. For any node $i \in V \backslash S$, define $\Delta \bar{\LL}(S)=\sum_{j\in \calN_i\setminus S}(\EE_{ij}+\EE_{ji})$. For the case the matrix $\LL(S)$ is perturbed with $\Delta \bar{\LL}(S)$, suppose $\lambda$ varies with $\Delta \bar{\lambda}$ and $\uu$ varies with $\Delta \bar{\uu}$. Then,
\begin{equation*}
    (\LL(S)+\Delta \bar{\LL}(S))(\uu+\Delta\bar{\uu})=(\lambda +\Delta \bar{\lambda})(\uu+\Delta\bar{\uu}),
\end{equation*}
and the eigen-gap for the smallest eigenvalue can be approximated by
\begin{equation*}
 \Delta \bar{\lambda} \approx 2\,\uu_j\sum_{i \in \calN_j \setminus S} \uu_i=\tilde{\lambda}_S(j),
\end{equation*}
as given by Eq.~\eqref{eq:c1}.
\end{lemma}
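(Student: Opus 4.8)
The plan is to reuse, almost verbatim, the single-edge perturbation computation carried out just above the lemma, now applied to the whole bundle of edges incident to the node $j$ that is being added to $S$. First I would simply record the perturbed eigenequation: perturbing $\LL(S)$ by $\Delta\bar\LL(S)$ moves the smallest eigenpair $(\lambda,\uu)$ to $(\lambda+\Delta\bar\lambda,\,\uu+\Delta\bar\uu)$, which is precisely the first displayed identity in the statement; that the perturbed smallest eigenvalue is still the smallest and that $\Delta\bar\lambda\ge 0$ follow from the monotonicity of $\lambda(S)$ in the edge weights already noted (deleting edges among $V\setminus S$ only raises $\lambda$).

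Next I would left-multiply that equation by $\uu^\top$. Using $\uu^\top\LL(S)=\lambda\uu^\top$ (symmetry of $\LL(S)$ together with $\LL(S)\uu=\lambda\uu$), the two copies of $\lambda\,\uu^\top(\uu+\Delta\bar\uu)$ cancel, and solving for the eigen-gap gives
\[
\Delta\bar\lambda=\frac{\uu^\top\,\Delta\bar\LL(S)\,\hat\uu}{\uu^\top\hat\uu},\qquad\hat\uu:=\uu+\Delta\bar\uu .
\]
Substituting $\Delta\bar\LL(S)=\sum_{i\in\calN_j\setminus S}(\EE_{ij}+\EE_{ji})$, the numerator expands to $\sum_{i\in\calN_j\setminus S}(\uu_i\hat\uu_j+\uu_j\hat\uu_i)$.

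Finally I would invoke the same approximation used in the one-edge case: in a large network, deleting the $O(1)$ edges at $j$ perturbs the eigenvector only slightly, so $\Delta\bar\uu\approx\boldsymbol{0}$, whence $\hat\uu\approx\uu$ and $\uu^\top\hat\uu\approx\uu^\top\uu=1$ for a unit-norm $\uu$. This yields
\[
\Delta\bar\lambda\approx\sum_{i\in\calN_j\setminus S}(\uu_i\uu_j+\uu_j\uu_i)=2\,\uu_j\!\!\sum_{i\in\calN_j\setminus S}\!\!\uu_i=\tilde\lambda_S(j),
\]
i.e.\ Eq.~\eqref{eq:c1}. Equivalently, one may obtain the same expression by deleting the incident edges one at a time and summing the single-edge estimate $\Delta\lambda\approx 2\uu_i\uu_j$ at each step, treating $\uu$ as frozen throughout; this is the ``repeating the operation of edge removal'' phrasing in the text.

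The only genuine content, and hence the main obstacle, is the justification of $\Delta\bar\uu\approx\boldsymbol{0}$: strictly, $\Delta\bar\LL(S)$ is not a rank-one update, so one is discarding the second-order cross terms between the individual edge perturbations. This is exactly the first-order matrix-perturbation heuristic already accepted for a single edge (and supported by the cited references there); the lemma merely asserts that it survives when the removed edges all share the endpoint $j$, which is transparent because the first-order contributions enter $\uu^\top\Delta\bar\LL(S)\uu$ additively over the incident edges.
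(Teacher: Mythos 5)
Your proposal matches the paper's approach: the paper gives no separate proof, stating only that the lemma is ``immediate by repeating the operation of edge removal on all the edges incident to a node $j$,'' i.e.\ exactly the single-edge perturbation computation (left-multiply the perturbed eigenequation by $\uu^\top$, take $\Delta\bar{\uu}\approx\boldsymbol{0}$, and sum the contributions $2\uu_i\uu_j$ over the incident edges) that you carry out explicitly. Your added remark that the first-order contributions enter $\uu^\top\Delta\bar{\LL}(S)\uu$ additively is a correct and slightly more careful articulation of the same heuristic the paper relies on.
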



According to Lemma~\ref{lem:deltall}, if we add a new node $j$ into set $S$, then $\Delta \bar{\lambda}$ equals to $\lambda_S(j)$ when $\lambda(S+j) \leq d_j$. Thus maximizing $\lambda(S+j)$ is reduced to maximizing $\lambda_S(j)$ or its approximation $\tilde{\lambda}_S(j)$.



\begin{remark}
If $S$ is an empty set, it is apparent that the smallest eigenvalue of $\LL(S)$ is 0, and the corresponding eigenvector is $\uu=\boldsymbol{1}$. Hence the importance of each node equals to its degree, which is consistent with the conclusion in~\cite{LiXuLuChZe21} that selecting nodes with large degrees is a good solution when only choosing a few number of grounded nodes.
\end{remark}

\subsection{Fast Algorithm}


The above analysis  shows that the optimization of $\lambda(S)$ is reduced to the computation of the corresponding eigenvector $\uu$ at each iteration. Below we introduce two methods, Lanczos method and SDDM solver or Laplacian solver, for computing the eigenvector $\uu$, and compare them in Experiments Section.


Lanczos method~\cite{La52} is classical method to calculate the eigenvalues and eigenvectors of a Hermitian matrix by iteratively tridiagonalizing the matrix. Although it was  proposed almost 70 years ago,  it still remains  one of the most important algorithms in numerical computation, and has become a standard algorithm for evaluating eigenvalues and eigenvectors. Since in our case only the eigenpair associated with the smallest eigenvalue of the grounded Laplacian matrix is needed,  Lanczos method outputs the results in a very short  time.

SDDM solver or Laplacian solver~\cite{BaSpSrTe13,SpTe14,CoKyMiPaJaPeRaXu14} is another method to  estimate eigenvector $\uu$ without  computing the complete eigensystem. By combining  SDDM solver and the inverse power method, the eigenvector of the  smallest eigenvalue for a SDDM matrix $\MM$ can be solved in nearly linear time. For consistency,  we first introduce the SDDM solver.

\begin{lemma}\label{lem:solver}
There is a nearly linear time solver $\xx = \SDDMSolver(\MM, \yy, \epsilon)$ which takes a symmetric positive semi-definite matrix $\MM_{n\times n}$ with $m$ nonzero entries, a vector $\bb \in \mathbb{R}^n$, and an error parameter $\delta > 0$, and returns a vector $\xx \in \mathbb{R}^n$ satisfying $\norm{\xx - \MM^{-1} \yy}_{\MM} \leq \delta \norm{\MM^{-1}\yy}_{\MM}$ with high probability, where $\norm{\xx}_{\MM} \defeq \sqrt{\xx^\top \MM \xx}$. The solver runs in expected time $\tilde{O}(m)$, where $\tilde{O}(\cdot)$ notation suppresses the ${\rm poly} (\log n)$ factors. 
\end{lemma}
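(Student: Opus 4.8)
\textbf{Proof proposal for Lemma~\ref{lem:solver}.}
The plan is to \emph{not} reprove this statement from scratch, since it is a now-standard result on fast linear-system solvers for symmetric diagonally dominant (and more generally SDDM) matrices, and to instead assemble it from the existing literature cited in the statement~\cite{SpTe14,CoKyMiPaJaPeRaXu14,BaSpSrTe13}. The first step is to recall the classical reduction: any SDDM matrix $\MM$ can be turned, in $O(m)$ time and via an exact transformation that preserves the solution up to trivial bookkeeping, into a graph Laplacian system on a graph with $O(m)$ edges; thus it suffices to exhibit a nearly linear time Laplacian solver. This is the step that lets us speak interchangeably of ``SDDM solver'' and ``Laplacian solver'' in the rest of the paper.

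The second step is to invoke the iterative solver framework. One builds a chain of progressively sparser approximations of the Laplacian using spectral sparsification~\cite{BaSpSrTe13} together with low-stretch spanning trees, obtaining a sequence of preconditioners, and then runs a preconditioned iterative method (Richardson or Chebyshev iteration) on this chain. Each outer iteration costs $\tilde{O}(m)$ work, and the spectral approximation guarantees ensure that $O(\log(1/\delta))$ iterations suffice to drive the $\MM$-norm relative error below $\delta$; hence the total expected running time is $\tilde{O}(m)$, with the $\mathrm{poly}(\log n)$ factors absorbed into the $\tilde{O}(\cdot)$ notation and the ``with high probability'' clause coming from the randomized sparsification routine. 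The output guarantee $\norm{\xx - \MM^{-1}\yy}_{\MM} \le \delta \norm{\MM^{-1}\yy}_{\MM}$ is exactly the error measure in which these iterative schemes contract geometrically, so no conversion between norms is needed.

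The only genuinely delicate point — were one to carry out the argument in full rather than cite it — is the construction and analysis of the preconditioner chain: controlling the condition number at each level of the recursion, bounding the total stretch, and ensuring that the sparsification error does not compound across levels. That analysis is precisely the content of~\cite{SpTe14} and its successors, and we take it as a black box. For our purposes the lemma is used purely as an oracle: given $\MM$, $\yy$, and $\delta$, it returns an approximate solution $\xx$ in $\tilde{O}(m)$ time, which is all the subsequent algorithm requires. $\Box$
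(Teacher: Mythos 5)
Your proposal matches the paper's treatment: the paper does not prove this lemma either, but imports it as a black-box result from the cited literature on nearly linear time SDDM/Laplacian solvers~\cite{BaSpSrTe13,SpTe14,CoKyMiPaJaPeRaXu14}, exactly as you do. Your sketch of the underlying machinery (reduction to Laplacian systems, preconditioner chains, sparsification, iterative refinement in the $\MM$-norm) is accurate and consistent with how the lemma is used later in Algorithm~\ref{alg:vector}, so there is nothing to correct.
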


\begin{algorithm}[htbp]
	\caption{\textsc{ApproxVector}$(\MM,\epsilon)$}\label{alg:vector}
\begin{algorithmic}
		\Require A SDDM matrix $\MM$; an error parameter $0<\epsilon<1$
		\Ensure $\uu$: an approximation of the eigenvector of the smallest  eigenvalue  of  matrix $\MM$
		\State Initialize vector $\vv=\boldsymbol{1}^\top$
        \State Set $\omega=1$
		\While{$\omega>\epsilon$}
            \State $\uu=\vv$
            \State $\vv = \SDDMSolver(\MM, \uu, \epsilon)$
            \State $\vv=\frac{\vv}{\norm{\vv}}$
            \State $\omega=\vert \uu^\top \MM \uu - \vv^\top \MM \vv\vert /\vv^\top \MM \vv$
		\EndWhile
\State \Return $\uu$
\end{algorithmic}
\end{algorithm}

Algorithm~\ref{alg:vector} describes the details for computing  the eigenvector $\uu$ associated with the smallest eigenvalue of a grounded Laplacian matrix, which combines the SDDM solver and the inverse power method (line 5). It avoids directly inverting the matrix $\MM$ and significantly improves the efficiency of calculation. Since the setting of parameters is not the focus of our study in the approximation calculation, we simply set the error parameter in the SDDM Solver in line 5 to be $0<\epsilon<1$, while specific details can be found in~\cite{CoKyMiPaJaPeRaXu14}. Also, in most cases the smallest eigenvalue of a grounded Laplacian matrix is very small,  very few iterations, even one iteration, can return a fairly good approximation for $\uu$.


After obtaining  the eigenvector $\uu$ of the smallest eigenvalue  $\lambda(S)$ of the grounded Laplacian matrix, by using  Eq.~\eqref{eq:c1}, for each node $j \in V \backslash S$, $\tilde{\lambda}_S(j)$ can be directly computed. Having these as a prerequisite, we now propose a $\tilde{O}(km)$-time approximation algorithm \textsc{Fast} for the MaxSEGL problem formulated in Problem~\ref{prob:2}. In Algorithm~\ref{alg:approx}, we  present the computation details, which runs $k$ iterations.  In each iteration, the calculation of the  eigenvector $\uu$ (line 4) takes $\tilde{O}(m)$ running time, and the computation of $\tilde{\lambda}_S(j)$ (line 5) for each node $j \in V \backslash S$ takes $O(m)$ time. Thus, the total running time for Algorithm \textsc{Fast} is $\tilde{O}(km)$, which is applicable to large  networks with millions of nodes, as will be shown in the next section.

\begin{algorithm}[htbp]
	\caption{\textsc{Fast}$(\calG, k,\epsilon)$}\label{alg:approx}
\begin{algorithmic}
		\Require A graph $\calG=(V,E)$; an integer $k < \vert V\vert $; an error parameter $\epsilon>0$
		\Ensure $S$: a subset of $V$ with $\vert S\vert  = k$
		\State Initialize solution $S = \emptyset$
        \State Let $\LL$ be the Laplacian matrix of graph $\calG$
		\For{$i = 1$ to $k$}
           \State $\uu=\textsc{ApproxVector}(\LL(S),\epsilon)$
           \State Compute $ \tilde{\lambda}_S(j)= 2\,\uu_j\sum_{i \in \calN_j \setminus S} \uu_i$ for each $j \notin S$
			\State Select $s$ s.t.  $s \gets \mathrm{arg\, max}_{j \in V \setminus S} \tilde{\lambda}_S(j)$
			\State Update solution $S \gets S \cup s$
		\EndFor
\State \Return $S$
\end{algorithmic}
\end{algorithm}



\section{Experiments}
\label{experiments}

In this section,  we perform extensive experiments to evaluate the  performance of our proposed
algorithms on diverse real-world networks, in terms of effectiveness and efficiency.

\subsection{Setup}

\textbf{Datasets}. Our algorithms are tested on a diverse set of real-world networks with  up to millions of nodes, all of which are publicly available in KONECT~\cite{Ku13} and SNAP~\cite{LeSo16}. Without loss of generosity, we only consider  connected networks, while for any disconnected network we  execute experiments on its largest component.  Relevant statistics of these datasets is summarized in Table~\ref{table:networksize}, where networks are shown in increasing order of the numbers of nodes.

\begin{table}[htbp]\centering
\caption{Information about the networks with $n$ nodes and $m$ edges.}\label{table:networksize}
\begin{tabular}{ScSS} \toprule
  & {\textbf{Network}}&$n$ & $m$\\ \midrule
1  & Tribes & 16 & 58 \\
2 & Firm-Hi-Tech & 33 &147\\
3 & Karate & 34 &78 \\
4 & Dolphins     &62 &159 \\

5 & 685-bus       & 685        & 1282       \\
6 & Email-Univ       & 1133       & 5451       \\
7 & Bcspwr09      & 1723       & 2394       \\
8 & Routers-RF     & 2113       & 6632  \\
9 & US-Grid       & 4941       & 6594       \\
10 & Bcspwr10      & 5300       & 8271       \\
11 & Pages-Government & 7057       &  89455 \\
12 & WHOIS          & 7476       & 56943      \\

13 & Pretty Good Privacy     & 10680      & 24340      \\
14 & Anybeat         & 12645      & 67053      \\
15 & Webbase-2001    & 16062      & 25593      \\
16 & As-CAIDA2007   & 26475      & 53381 \\

17 & Epinions        & 26588      & 100120   \\
18 &  Email-EU         & 32430      & 54397      \\
19 & Internet-As    & 40164      & 85123     \\
20 & P2P-Gnutella   & 62561      & 147878  \\
21 & RL-Caida    & 190914     & 607610     \\
22  &  DBLP-2010  & 226413& 716460    \\
23  & Twitter-follows &404719 &713319  \\
24  &Delicious &536108 &1365961  \\
25    &FourSquare & 639014 & 3214986  \\
26  &Youtube-Snap &1134890 & 2987624  \\
    \bottomrule
\end{tabular}
\end{table}

\textbf{Machine Configuration and Reproducibility}. All our algorithms are programmed and  implemented in Julia, in order to facilitate the use of the SDDM solver  in the Julia Laplacian.jl package  available on the website\footnote{https://github. com/danspielman/Laplacians. jl}. The error parameter $\epsilon$ is set to be $10^{-3}$ in the experiments. The Lanzcos method is also programmed in Julia. All our experiments are run on a Linux box with 4.2 GHz Intel i7-7700 CPU and 32G memory, using a single thread.

\textbf{Baseline Methods.} Our proposed algorithms,  Algorithm \textit{Na\"{\i}ve} and Algorithm \textit{Fast}
(that is, Algorithm~\ref{alg:exact} and Algorithm~\ref{alg:approx}), are compared with several baseline methods,  which are briefly summarized as follows.
\begin{enumerate}
    \item \textsc{Optimum}: choose $k$ nodes forming  the set $S$ that maximize $\lambda(S)$ by exhaustive search.
	\item \textsc{Degree}: choose $k$ nodes with  largest degrees.
	\item \textsc{Eigenvector}: choose $k$ nodes  with highest eigenvector centrality associated with leading eigenvalue of adjacency matrix.
	\item \textsc{Betweenness}: choose $k$ nodes with highest betweenness.
	\item \textsc{Closeness}: choose $k$ nodes with highest closeness  centrality.
\end{enumerate}








\subsection{Effectiveness}

To demonstrate the   effectiveness of our algorithms \textsc{Na\"{\i}ve} and \textsc{Fast}, we first execute experiments  on four small networks with less than 100 nodes: Dolphins, Tribes, Karate and Firm-Hi-Tech,  comparing  our algorithms with \textsc{Optimal} and \textsc{Degree} schemes. Since computing the optimal solution requires  exponential time, we only consider five cases of $k=1,2,3,4,5$, and  display the results in  Fig.~\ref{fig:lambda}. One can observes from Fig.~\ref{fig:lambda} that our algorithms \textsc{Na\"{\i}ve} and \textsc{Fast} have little difference, with both being close to the optimum results and better than \textsc{Degree} scheme. Although Fig.~\ref{fig:lambda}(c) shows that \textsc{Degree} scheme may output the optimal solution for  Karate networks when $k=2$ as our algorithms, for $k=3,4,5$ our two approaches have better  effectiveness than the \textsc{Degree} scheme.

\begin{figure}[htbp]
  \centering
  \includegraphics[width=.95\linewidth]{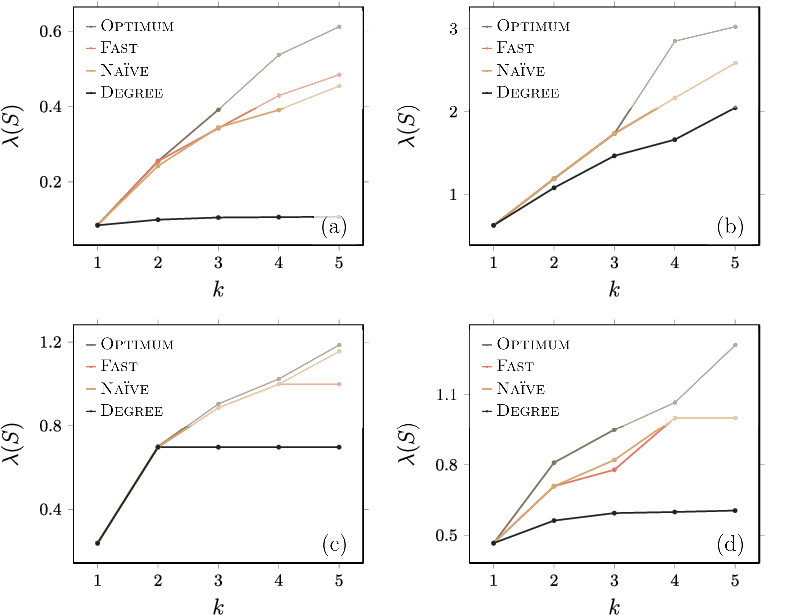}\\
\caption{$\lambda(S)$ given by \textsc{Fast}, \textsc{Na\"{\i}ve}, \textsc{Degree}, and \textsc{Optimum} schemes for $k$ ranging from 1 to 5 on four small networks: (a) Dolphins, (b) Tribes, (c) Karate and (d) Firm-Hi-Tech.}
\label{fig:lambda}
\end{figure}



In the above, we have shown that our algorithms \textsc{Na\"{\i}ve} and \textsc{Fast} exhibit similar effectiveness, which can be understood as follows. Since the difference between \textsc{Na\"{\i}ve} and \textsc{Fast} lies in the computation of $\lambda_S(j)=\lambda(S+j)-\lambda(S)$, the eigenvalue gap when a new node $j$ is added to set $S$. In \textsc{Na\"{\i}ve} , $\lambda_S(j)$ is directly computed by solving the eigensystem, while  in \textsc{Fast}, $\lambda_S(j)$ is evaluated by Eq.~\eqref{eq:c1}. Next, we show that the two methods for calculating $\lambda_S(j)$ in algorithms \textsc{Na\"{\i}ve} and $\tilde{\lambda}_S(j)$ in \textsc{Fast}  yield almost the same effect. To this end, we conduct  experiments on four networks: Dolphins, Tribes, Karate and Email-Univ. For each network,  we first randomly select five nodes forming  set $S$, and then add one more node into $S$ by computing the eigenvalue gap through the methods in \textsc{Na\"{\i}ve} and \textsc{Fast}. Figure~\ref{fig:lambda} reports the results for $\lambda_S(\cdot)$ and $\tilde{\lambda}_S(\cdot)$ for the two methods, which shows that the  eigenvalue gaps returned by the two approaches  are linearly proportional to each other,  justifying the similarity of \textsc{Na\"{\i}ve} and \textsc{Fast}.


\begin{figure}[htbp]
  \centering
  \includegraphics[width=1.0\linewidth]{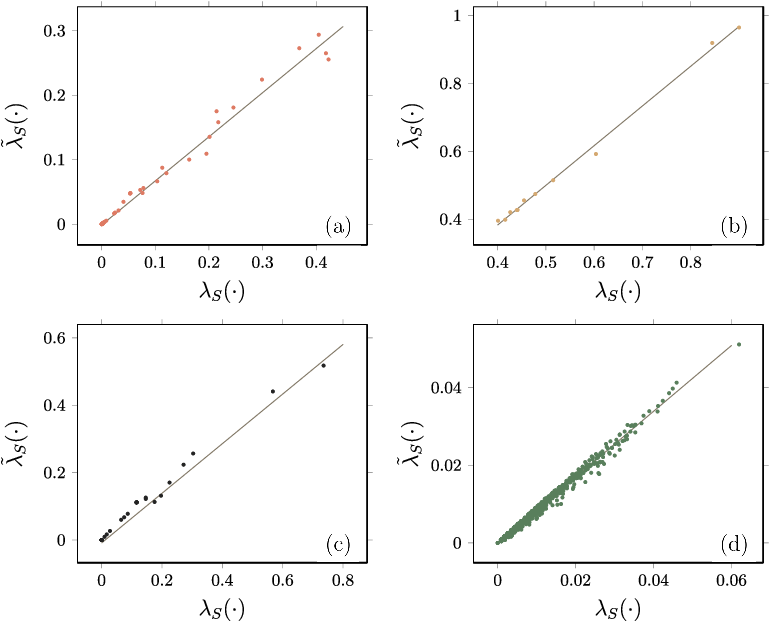}\\
\caption{Eigenvalue gap returned by \textsc{Na\"{\i}ve} denoted by $\lambda_S(\cdot)$ verse eigenvalue gap returned by \textsc{Fast} denoted by $\tilde{\lambda}_S(\cdot)$ on four  networks: (a) Dolphins, (b) Tribes, (c) Karate, and (d) Email-Univ.}
\label{fig:lambda}
\end{figure}

We further demonstrate the efficacy of  our algorithms \textsc{Fast} by comparing it  with four baseline  methods frequently used in previous work: \textsc{Degree},  \textsc{Eigenvector}, \textsc{Betweenness}, and \textsc{Closeness} on six networks, with the cardinality $k$ of set $S$ changing from $1$ to $100$. The comparison results are reported  in Fig.~\ref{fig:graph3}, from which we can observe that our proposed algorithm \textsc{Fast} almost outperforms the baseline methods.


\begin{figure}[htbp]
  \centering
  \includegraphics[width=1.0\linewidth]{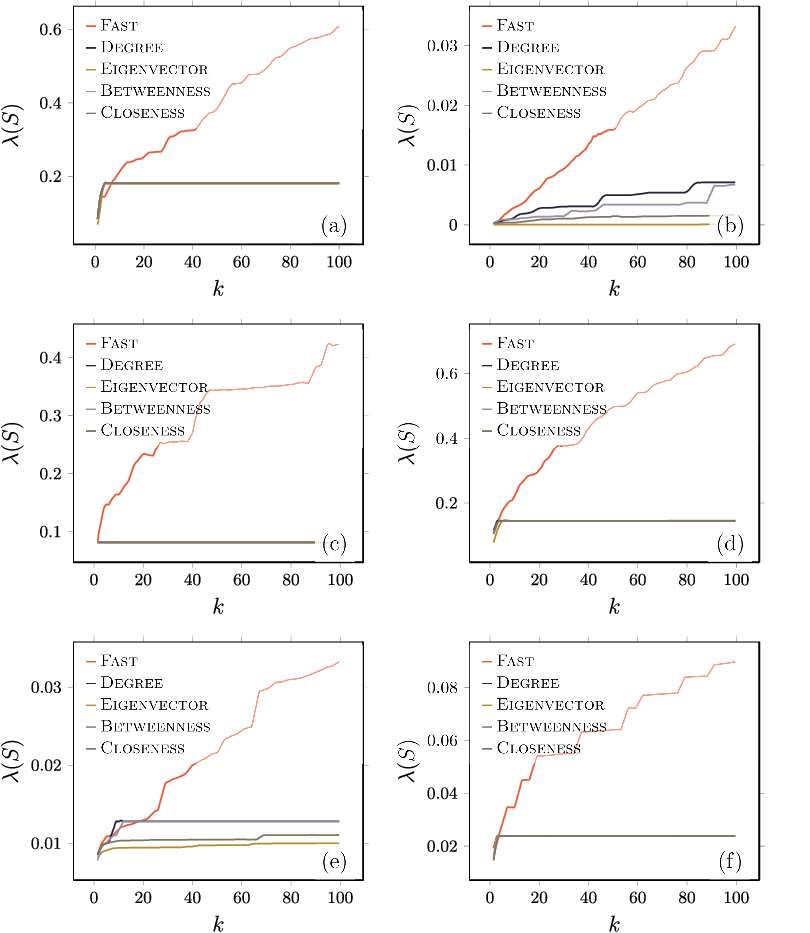}\\
\caption{$\lambda(S)$ returned  by \textsc{Fast},  \textsc{Degree},  \textsc{Eigenvector}, \textsc{Betweenness}, and \textsc{Closeness}, for $k$ ranging from 1 to 100 on six medium-size networks: (a) Pages-Government, (b) US-Grid, (c) Anybeat, (d) WHOIS, (e) Pretty Good Privacy, and (f) Epinions. }
\label{fig:graph3}
\end{figure}



\begin{figure}[htbp]
  \centering
  \includegraphics[width=.95\linewidth]{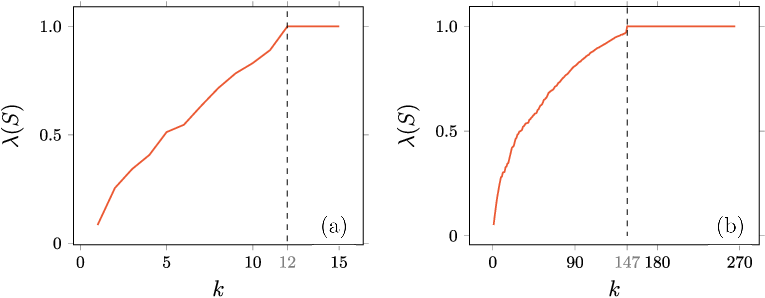}\\
\caption{$\lambda(S)$ returned  by \textsc{Fast}, on two networks: (a) Dolphins and (b) Email-Univ. The dashed line corresponds to the earliest moment when $\lambda(S)=1$ appears. }
\label{fig:graph5}
\end{figure}

To compare with the most dominant work~\cite{LiXuLuChZe21} on this optimization problem, we also repeat the experiment on Dolphins and Email-Univ in~\cite{LiXuLuChZe21}. According to~\cite{LiXuLuChZe21}, the Dolphins and Email-Univ's $\lambda(S)$ are doomed to be not smaller than 1 if $S$ are their dominating sets respectively. And it takes 14 and 266 nodes for method from~\cite{LiXuLuChZe21} to make $\lambda(S)$ arrive this bound. However, as shown in Fig.~\ref{fig:graph5}, our method only takes 12 and 147 nodes to grounded to let $\lambda(S)=1$. Besides, our method is not limited by $k$'s scale but method from~\cite{LiXuLuChZe21} cannot sustain oversized $k$.

\subsection{Efficiency and Scalability}

Although both \textsc{Fast} and \textsc{Na\"{\i}ve} are very accurate, the theoretical computation complexity of the two algorithms differs greatly. Here we experimentally compare the running time of \textsc{Fast} and \textsc{Exaxt} in various real networks, which is reported in Table~\ref{tab:time1}. The results show that algorithm \textsc{Fast} runs much faster than algorithm \textsc{Na\"{\i}ve}, especially on large-scale networks. For example, for the networks with more than 40,000 nodes marked with $*$, algorithm \textsc{Na\"{\i}ve} fails due to its tremendous time cost, while algorithm \textsc{Fast} takes only a few seconds for each iteration even on the network with over one million nodes, indicating the scalability of \textsc{Approx} to large networks.



\begin{table}[htbp]
		\centering
		\fontsize{9}{10}\selectfont
		\begin{threeparttable}
			\caption{Average running time in one iteration for computing $\lambda_S(\cdot)$  and $C_S(\cdot)$ defined in the caption of Figure~\ref{fig:lambda} and their ratio for all nodes on a larger set of real-world networks.}
			\label{tab:time1}
			\begin{tabular}{ccccc}
				\toprule
				\multirow{2}{*}{Network}&
\multirow{2}{*}{}&
				\multicolumn{3}{c}{Time (seconds)}\cr
				\cmidrule(lr){3-5}
				&& \textsc{Exact} & \textsc{Approx} &Ratio\cr
				\midrule
				
				685-bus&
				& 4.73  & 0.004 & 1182
				 \cr
				
				Bcspwr09&
				& 22.60  & 0.011  & 2054
				\cr
				
				Routers-RF&
				& 52.15  & 0.046  & 1133
				 \cr
				
				Bcspwr10&
				& 318.4  & 0.044  & 7236
				 \cr
				
				Webbase-2001&
				& 2613  & 0.134  & 19500
				 \cr
				
				As-Caida2007&
				& 5308   & 0.261  & 20337
				 \cr
				
				Epinions&
				& 4931  & 0.229  & 21532
				 \cr
				
				Email-EU&
				&  7148  & 0.126  & 56730   \cr
				
				Internet-as*&
				& -  & 0.227  & -  \cr
				
				P2P-gnutella*&
				& -  & 0.482  & -  \cr
				
				RL-caida*&
				& -  &1.992 & - \cr
				
				DBLP-2010*&
				& -  & 2.141  & -   \cr
				
				Twitter-follows*&
				& -  & 2.292 & -   \cr
				
				Delicious*&
				& -  & 4.731  & -   \cr

				FourSquare*&
				& -  &  16.33 & -   \cr

				Youtube-snap*&
				& -  &  31.72 & -  \cr
				
				\bottomrule
			\end{tabular}
		\end{threeparttable}
		\vspace{-5pt}
	\end{table}

We also experimentally compare the two techniques, Lanzcos method and SDDM solver, for computing the eigenvector $\uu$ associated with the smallest eigenvalue of the grounded Laplacian matrix, in terms of  the computation efficiency. For this purpose, we randomly select the grounded node set $S$ with size ranging from 1 to 10, and average the time for computing  eigenvector $\uu$, as reported in Table~\ref{tab:time2}. From the table, one can see that Lanzcos method is faster than SDDM solver when the networks relatively small. However, for larger networks with over 40,000 nodes, SDDM solver starts to show its advantages in terms of efficiency. For instance, for the last two networks with over 600,000 nodes, Lanzcos method fails while SDDM solver can still output the eigenvector in less than half a minute for each iteration. Thus, in all our experiments except those with results in Table~\ref{tab:time2}, the eigenvector corresponding to the smallest eigenvaluewe of  grounded Laplacian matrix is evaluated by SDDM solver, instead of Lanzcos method.


\begin{table}[htbp]
		\centering
		\fontsize{9}{9}\selectfont
		\begin{threeparttable}
			\caption{Average running time for Lanzcos method and Laplaican solver  for computing the
eigenvector  associated with the smallest eigenvalue of the grounded Laplacian matrix for some real-world networks.}
			\label{tab:time2}
			\begin{tabular}{ccc}
				\toprule
				\multirow{2}{*}{Network}&
				\multicolumn{2}{c}{Time (seconds)}\cr
				\cmidrule(lr){2-3}
				& Lanzcos Method & Inverse Power Method \cr
				\midrule

				685-bus&
				 0.002  &0.004
				 \cr
				
				Bcspwr09&
				 0.004  & 0.008
				\cr
				
				Routers-RF&
				 0.011  & 0.016
				 \cr
				
				Bcspwr10&
				 0.013  & 0.033
				 \cr
				
				Webbase-2001&
				 0.020  & 0.102
				 \cr
				
				As-Caida2007&
				 0.116   & 0.193
				 \cr
				
				Epinions&
				 0.441  & 0.411
				 \cr
				
				Email-EU&
				  0.205  & 0.157     \cr
				
				Internet-as&
				 0.199  & 0.211    \cr
				
				P2P-gnutella&
				 4.062  & 0.379   \cr
				
				RL-caida&
			  3.081 &2.034 \cr
				
				DBLP-2010&
				 8.390  & 1.157    \cr
				
				Twitter-follows&
				 3.328  & 1.371    \cr
				
				Delicious&
				 39.90  & 4.961    \cr

				FourSquare&
				 -  &  8.035   \cr

				Youtube-snap&
				 -  &  29.07 \cr

				\bottomrule
			\end{tabular}
		\end{threeparttable}
		\vspace{-5pt}
	\end{table}


In fact, in addition to  \textsc{Fast}, the two baseline methods  \textsc{Degree} and  \textsc{Eigenvector} are also scalable.  As shown in Fig.~\ref{fig:graph3}, \textsc{Degree} displays the best effect among the all heuristic baseline methods, in the following, we will compare  \textsc{Fast} with only \textsc{Degree}, the complexity of which is $O(n)$. Specifically, we compare \textsc{Fast} with  \textsc{Degree}  on four real networks with size ranging from 200,000 to 1,000,000,  and report comparison results  in Fig.~\ref{fig:bigg}.  From  Fig.~\ref{fig:bigg}, one observes that  \textsc{Fast} outperforms  \textsc{Degree}  on all the four studied networks.




\begin{figure}[htbp]
  \centering
  \includegraphics[width=.85\linewidth]{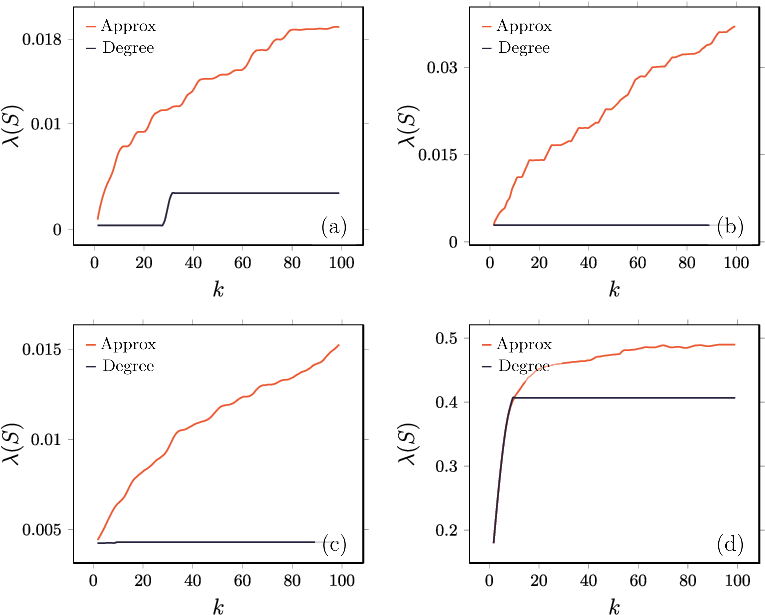}\\
\caption{$\lambda(S)$ returned  by \textsc{Fast} with  \textsc{Degree} with $k$ ranging from 1 to 100 on four large  networks: (a) RL-Caida, (b) Delicious, (c) Youtube-Snap, (d) FourSquare. }
\label{fig:bigg}
\end{figure}

\section{Conclusions}

In this paper, we studied the minimum eigenvalue $\lambda(S)$ for a grounded Laplacian matrix $\LL(S)$ of order $(n-k) \times (n-k)$, which is a principal submatrix of the Laplacian matrix $\LL$ for a graph with $n$ nodes and $m$ edges, obtained from $\LL$ by removing $k$ rows and columns corresponding to $k$ selected nodes forming set $S$. We focused on the problem of finding the set $S^*$ of $k$ nodes with an aim to maximize $\lambda(S^*)$. This problem arises naturally in various contexts, such as maximizing the convergence speed of leader-follower consensus dynamics and maximizing the effectiveness of pinning scheme of pinning control problem. We provided a rigorous proof of the NP-hardness of this combination optimization problem, and proved that the objective function is not submodular, despite its weak strict monotonicity. We developed an approximation algorithm with time complexity $\Otil (mk)$, which maximizes $\lambda(S)$ by iteratively choosing $k$ nodes in a greedy way. Extensive experiments on various realistic networks show our algorithm is both effective and efficient, which gives almost optimal solutions and is scalable to huge networks with more than $10^6$ nodes.

\section*{Declarations}
\bmhead{Availability of data and material}
Not applicable.

\bmhead{Conflict of interest}
There are no conflict of interest or competing interests related to this manuscript.

\bmhead{Acknowledgments}
This work was supported by the National Natural Science Foundation of China (Nos. 61872093 and U20B2051),  Shanghai Municipal Science and Technology Major Project  (Nos.  2018SHZDZX01 and 2021SHZDZX03),  ZJ Lab, and Shanghai Center for Brain Science and Brain-Inspired Technology. Run Wang was also supported by Fudan’s Undergraduate Research Opportunities Program (FDUROP) under Grant No. 2195200241021.

\bibliographystyle{sn-mathphys}

\bibliography{reference,EigenOpt}

%

\end{document}